\newcommand*{\N}{\mathbb{N}}                           
\newcommand*{\R}{\mathbb{R}}                           
\newcommand*{\Rp}{\mathbb{R}^{+}}                      
\newcommand*{\MN}[1]{\mathrm{M}_{#1}}                  
\newcommand*{\MNp}[1]{\mathrm{M}_{#1}^{+}}             
\newcommand*{\TMN}[1]{\mathrm{M}_{#1,\mathrm{sa}}^{0}} 
\newcommand*{\MNsa}[1]{\mathrm{M}_{#1,\mathrm{sa}}}    
\newcommand*{\MNN}[1]{\mathcal{M}_{#1}}                
\newcommand*{\di}{\displaystyle}                       
\newcommand*{\wt}[1]{\widetilde{#1}}                   
\newcommand*{\ce}{\mathrm{e}}                          
\newcommand*{\ci}{\mathrm{i}}                          
\newcommand*{\Tr}{\mathrm{Tr\ }}                       
\newcommand*{\Diag}{\mathrm{Diag}}                     
\newcommand*{\clr}{\mathrm{clr}}                       
\newcommand*{\Lm}[1]{L_{#1}}                           
\newcommand*{\Rm}[1]{R_{#1}}                           
\DeclareMathOperator\artanh{artanh}                    
\newcommand*{\gz}[1]{\left(#1\right)}                  
\newcommand*{\sz}[1]{\left[#1\right]}                  
\newcommand*{\kz}[1]{\left\{#1\right\}}                
\newcommand*{\norm}[1]{\left\Vert #1\right\Vert}       
\newcommand*{\scal}[1]{\left\langle#1\right\rangle}    
\newcommand*{\add}{\oplus}                             
\newcommand*{\mul}{\odot}                              
\let\@afterindentfalse\@afterindenttrue
\newtheorem{definition}{Definition}
\newtheorem{theorem}{Theorem}
\newtheorem{lemma}{Lemma}
\title{Quantum Aitchison geometry\thanks{
keywords: quantum state, quantum information, information geometry;
MSC2010: 81P16, 81P45.
}}
\author{
Attila Andai\thanks{andatt@gmai.com},
Attila Lovas\thanks{lovas@math.bme.hu}\\
Department of Mathematical Analysis, \\
Budapest University of Technology and Economics,\\
Egry J\'ozsef u. 1, Budapest, 1111, Hungary
}
\date{\today}
\begin{document}

\maketitle

\begin{abstract}
Multiplying a likelihood function with a positive number makes no difference in Bayesian statistical 
  inference, therefore after normalization the likelihood function in many cases can be considered as probability 
  distribution.
This idea led Aitchison to define a vector space structure on the probability simplex in 1986.
Pawlowsky-Glahn and Egozcue gave a statistically relevant scalar product on this space in 2001, endowing the
  probability simplex with a Hilbert space structure.
In this paper we present the noncommutative counterpart of this geometry.
We introduce a real Hilbert space structure on the quantum mechanical finite dimensional state space.
We show that the scalar product in quantum setting respects the tensor product structure
  and can be expressed in terms of modular operators and Hamilton operators.
Using the quantum analogue of the log-ratio transformation it turns out that all the newly introduced operations emerge
  naturally in the language of Gibbs states.
We show an orthonormal basis in the state space and study the introduced geometry on the space of qubits in details.
\end{abstract}

\section{Introduction}

The very first step towards the Euclidean structure on the quantum mechanical state
  space was done by Aitchison in 1986 when he noticed that 
  the operation of Bayes's formula to change a prior probability assessment $\pi$ into a posterior one $p$
  through a likelihood function $\rho$, can be viewed as an operation on the probability simplex 
  \cite{AitchisonDataAnalysisFirst,AitchisonOrig,AitchisonRoleOfPert}.
Multiplying a likelihood function with a positive constant results the same updated
  probability distribution, therefore in many cases with an appropriate multiplicative factor 
  one can transform the likelihood to a probability distribution $\rho$.
Such ideas led to the definition of an abstract operation $\oplus$ for the updating process
  $p=\rho\oplus\pi$.
The operation $\oplus$ is called perturbation, because in applications the likelihood function 
  has just a slight effect on the prior distribution.

To be more concrete let us denote the interior of the probability simplex by $S_{n}$ ($n\in\N\setminus\kz{0,1}$), that is
\begin{equation*}
 S_{n}=\kz{(p_{1},\dots,p_{n})\in\left]0,1\right[^{n}\Bigm\vert\ \sum_{i=1}^{n}p_{i}=1 }.
\end{equation*}
The perturbation is defined as an $S_{n}\times S_{n}\to S_{n}$ operation, if $p,q\in S_{n}$ then
\begin{equation}
\label{eq:oplusCl}
 p\oplus q=\frac{1}{\di\sum_{i=1}^{n}p_{i}q_{i}}\times (p_{1}q_{1},\dots,p_{n}q_{n})
\end{equation}
  and scalar multiplication is a $\R\times S_{n}\to S_{n}$ map, for $\lambda\in\R$ and $p\in S_{n}$
\begin{equation}
\label{eq:omulCl}
\lambda\mul p=\frac{1}{\di\sum_{i=1}^{n}p_{i}^{\lambda}}\times (p_{1}^{\lambda},\dots,p_{n}^{\lambda}).
\end{equation}
It is easy to check that $(S_{n},\oplus,\mul)$ is a vector space, where the zero vector is the uniform distribution.
This linear structure of the simplex was studied in details \cite{AitchisonAlgMethods,Billheimer}.
Note that intervals in this Aitchison geometry $(t\mul p)\oplus((1-t)\mul q)$ ($t\in\sz{0,1}$) play crucial
  role in hypothesis testing and have been studied in detail for decades in connection with relative entropy and 
  Chernoff distance \cite{InfoBookCover,InfoBookCsiszar,PetzQinf}.   
In statistic these intervals are often referred to as Hellinger arcs.
  
In 2001, Pawlowsky-Glahn and Egozcue using the notion of metric center and metric variation endowed 
  this space with an inner product 
\begin{equation}
\label{eq:oscalCl}
\scal{\cdot,\cdot}_{\circ}:S_{n}\times S_{n}\to\R \qquad
(p,q)\mapsto\frac{1}{2n}\sum_{i,j=1}^{n}\log\gz{\frac{p_{i}}{p_{j}}}\log\gz{\frac{q_{i}}{q_{j}}}
\end{equation}
  which turned out to be a simple tool to prove essential properties of statistical inference 
  \cite{EgozcueGlahnFirst,PawlowskyBLU}.
An orthonormal basis was constructed in the Hilbert space $(S_{n},\scal{\cdot,\cdot}_{\circ})$  which has particular   
  importance in estimation theory \cite{AitchisonONB}.
The norm $\norm{\cdot}_{\circ}$ induced by the scalar product is called information evidence $I_{e}$, since it fulfills
  the natural requirements that from different prior distributions $\pi_{1},\pi_{2}$ the same measurement
  inducing likelihood $\rho$, the information gain in posterior distributions $p_{i}=\rho\oplus\pi_{i}$ ($i=1,2$)
  should be the same $I_{e}(p_{1}\ominus\pi_{1})=I_{e}(p_{2}\ominus\pi_{2})=I_{e}(\rho)$
  and the distance between the prior and posterior distributions should be the same
  $I_{e}(p_{1}\ominus p_{2})=I_{e}(\pi_{1}\ominus\pi_{2})$ \cite{EgozcueEvidence}.
The formal definition of $p\ominus\pi$ is $p\oplus((-1)\mul\pi)$.
  
This geometrical structure of the simplex was extended to different statistical models recently.
For probability density functions on bounded intervals of the real numbers \cite{BarreroHilbertSpace},
  for $\sigma$-finite measures on a measurable space \cite{BoogaartBayesLinSpaces,BoogaartBayesHilbert}.
In this paper a similar Hilbert space structure is presented on the quantum mechanical state space 
  inspired by the classical Aitchison geometry.

\section{Hilbert space structure on the quantum mechanical state space}
\subsection{Basic notations and lemmas}

We work in finite dimensional framework, that is every Hilbert space will be 
  assumed to be finite dimensional over the complex field.
Let us fix some notation.
The letters $n,m$ denote integers greater than or equal to two.
The symbol $\MN{n}$ stands for the algebra of $n\times n$ complex matrices,
  $\MNsa{n}$ denotes the set of self-adjoint elements in $\MN{n}$ and
  $\MNp{n}$  denotes the set of positive definite matrices in $\MNsa{n}$.
The trace one elements of $\MNp{n}$ form the interior of the $n$-level quantum mechanical state
  space which is denoted by $\MNN{n}$, that is $\MNN{n} = \kz{D\in\MNp{n} |\Tr D = 1}$.
The linear structure of self-adjoint traceless matrices will play role in computations therefore
  we introduce the abbreviation $\TMN{n}$ for the set of those matrices.
The symbol $I_{n}$ is used for the $n\times n$ identity matrix and
  to shorten formulae for every matrix $A\in\MNp{n}$ the abbreviation $\wt{A}=\log(A)$ is used.
The matrix units $E_{ij}$ will be used, $E_{ij}$ is the matrix with every component zero,
  except the component $ij$, which is one and the size of the matrix units will be clear from the context.  

To a matrix $X\in\MN{n}$, one can associate the left and right multiplication 
  operators $\Lm{X},\Rm{X}:\MN{n}\to\MN{n}$ that act like
\begin{align*}
A \mapsto \Lm{X} (A) &= XA \\
A \mapsto \Rm{X} (A) &= AX. 
\end{align*}
For states $D_{1},D_{2}\in\MNN{n}$ the relative modular operator is defined as
  $\Delta_{D_{1}/D_{2}}=\Lm{D_{1}}\Rm{D_{2}^{-1}}$, 
  that is for every $A\in\MN{n}$ one has $\Delta_{D_{1}/D_{2}}(A)=D_{1}AD_{2}^{-1}$.
In the case $D_{1}=D_{2}$ the short notation $\Delta_{D}$ is used for $\Delta_{D/D}$.

The space $\MN{n}$ is endowed with  Hilbert-Schmidt inner product
\begin{equation*}
\scal{A,B}_{n}=\Tr A^{*}B\quad\forall A,B\in \MN{n}.
\end{equation*}
For every matrix $X\in\MNsa{n}$ the multiplication operators 
 $\Lm{X}$, $\Rm{X}$ and in the invertible case $\Delta_{X}$ too 
 are self-adjoint, moreover, for every $D\in\MNN{n}$ the operators $\Lm{D}$, $\Rm{D}$ and $\Delta_{D}$ are
 positive definite.

\begin{lemma}
\label{th:tensorlog}
For matrices $A\in\MNp{n}$ and $B\in\MNp{m}$ one has for their logarithm
\begin{equation*}
\wt{A\otimes B}=\wt{A}\otimes I_{m}+I_{n}\otimes\wt{B}
\quad \mbox{and}\quad
\wt{A\otimes A^{-1}}=\wt{A}\otimes I_{n}-I_{n}\otimes\wt{A}.
\end{equation*}
\end{lemma}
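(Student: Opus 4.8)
The plan is to verify the identity after applying the matrix exponential, exploiting that $\exp$ is a bijection from $\MNsa{nm}$ onto $\MNp{nm}$, so that the logarithm of a positive definite matrix is the unique self-adjoint solution $H$ of $\exp(H)=A\otimes B$. Since both $A\otimes B$ and the claimed right-hand side $\wt{A}\otimes I_{m}+I_{n}\otimes\wt{B}$ are manifestly self-adjoint, it suffices to check that exponentiating the latter reproduces $A\otimes B$.

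First I would record the two elementary facts on which everything rests. The summands $\wt{A}\otimes I_{m}$ and $I_{n}\otimes\wt{B}$ commute, because
\begin{equation*}
\gz{\wt{A}\otimes I_{m}}\gz{I_{n}\otimes\wt{B}}=\wt{A}\otimes\wt{B}=\gz{I_{n}\otimes\wt{B}}\gz{\wt{A}\otimes I_{m}},
\end{equation*}
and the matrix exponential respects the tensor embedding, i.e. $\exp\gz{X\otimes I_{m}}=\exp(X)\otimes I_{m}$ and $\exp\gz{I_{n}\otimes Y}=I_{n}\otimes\exp(Y)$, which is immediate from the power series since $\gz{X\otimes I_{m}}^{k}=X^{k}\otimes I_{m}$.

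With these in hand the computation is short. Because the two summands commute, $\exp$ turns their sum into a product; the tensor-embedding property together with $\exp\wt{A}=A$ and $\exp\wt{B}=B$ then gives
\begin{equation*}
\exp\gz{\wt{A}\otimes I_{m}+I_{n}\otimes\wt{B}}
=\gz{A\otimes I_{m}}\gz{I_{n}\otimes B}
=A\otimes B.
\end{equation*}
By uniqueness of the self-adjoint logarithm of the positive definite matrix $A\otimes B$, this yields the first identity $\wt{A\otimes B}=\wt{A}\otimes I_{m}+I_{n}\otimes\wt{B}$. The second identity is then the special case $m=n$, $B=A^{-1}$, using $\wt{A^{-1}}=\log\gz{A^{-1}}=-\wt{A}$.

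There is no real obstacle here; the only point requiring minimal care is the appeal to uniqueness of the logarithm, which relies on $A\otimes B$ being positive definite (a tensor product of positive definite matrices), so that it lies in the range of $\exp\vert_{\MNsa{nm}}$ and its principal logarithm is well defined. An alternative, equally routine route avoids $\exp$ altogether: diagonalize $A=\sum_{i}a_{i}P_{i}$ and $B=\sum_{j}b_{j}Q_{j}$, observe that $\kz{P_{i}\otimes Q_{j}}$ is a complete orthogonal family of projections with $A\otimes B=\sum_{i,j}a_{i}b_{j}\,P_{i}\otimes Q_{j}$, apply $\log$ eigenvalue-wise, split $\log\gz{a_{i}b_{j}}=\log a_{i}+\log b_{j}$, and resum using $\sum_{i}P_{i}=I_{n}$ and $\sum_{j}Q_{j}=I_{m}$.
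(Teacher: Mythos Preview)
Your argument is correct and rests on the same core observation as the paper's proof: the two pieces commute, so the logarithm of the product splits as a sum. The only cosmetic difference is that the paper identifies $A\otimes B$ with the commuting pair of multiplication operators $\Lm{A}\Rm{B}$ and writes $\wt{\Lm{A}\Rm{B}}=\wt{\Lm{A}}+\wt{\Rm{B}}=\Lm{\wt{A}}+\Rm{\wt{B}}$, whereas you work directly with the tensor factors and verify the identity by exponentiating; both routes are equivalent, and your spectral-decomposition alternative is a perfectly good third variant.
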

\begin{proof}
The tensor product $A\otimes B$ identified as an $\MN{m}\to\MN{n}$ linear map acting like $X\mapsto AXB^{*}$.
Using the commutativity of the operators $\Lm{A}$ and $\Rm{B}$ 
\begin{equation*}
\wt{A\otimes B}=\wt{\Lm{A}\Rm{B}}=\wt{\Lm{A}}+\wt{\Rm{B}}
  =\Lm{\wt{A}}+\Rm{\wt{B}}=\wt{A}\otimes I_{m}+I_{n}\otimes\wt{B}
\end{equation*}
follows.
\end{proof}

\subsection{Elementary operations}

Now we are in the position to present the quantum analogue of the perturbation $\oplus$,
  multiplication $\mul$ and scalar product $\scal{\cdot,\cdot}_{\circ}$ given by
  Equations (\ref{eq:oplusCl},\ref{eq:omulCl},\ref{eq:oscalCl}).

\begin{definition}
On the set of positive definite matrices $\MNp{n}$ define the following operations for every
  $A,B\in\MNp{n}$ and $\lambda\in\R$.
\begin{align*}
A\add B&:=\frac{\ce^{\wt{A}+\wt{B}}}{\Tr \ce^{\wt{A}+\wt{B}}} \\
\lambda\mul A&:=\frac{\ce^{\lambda\wt{A}}}{\Tr \ce^{\lambda\wt{A}}}\\
\scal{A,B}_{\circ}&:=\frac{1}{n}\Tr(\wt{A}\wt{B})-\frac{1}{n^{2}}(\Tr\wt{A})(\Tr\wt{B}).
\end{align*}
\end{definition}

First note about these operations that $A\add B,\lambda\mul A\in\MNN{n}$ if $A,B\in\MNp{n}$ and $\lambda\in\R$,
  moreover, they are defined on the rays of positive definite operators
  in the following sense.

\begin{lemma}
\label{th:scaleinv}
For every $A,B\in\MNp{n}$ and $c\in\Rp$ we have the following scale invariance.
\begin{align*}
A\add B&=(cA)\add B=A\add (cB)\\
\lambda\mul A&=\lambda\mul (cA)\\
\scal{A,B}_{\circ}&=\scal{cA,B}_{\circ}=\scal{A,cB}_{\circ}
\end{align*}
\end{lemma}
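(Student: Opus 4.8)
The plan is to reduce all three identities to a single elementary observation about the logarithm. For $A\in\MNp{n}$ and $c\in\Rp$ the scalar matrix $cI_{n}$ commutes with $A$, hence $\wt{cA}=\log(cA)=(\log c)\,I_{n}+\wt{A}$. Since each of the three expressions is symmetric in its matrix arguments (the perturbation and the form $\scal{\cdot,\cdot}_{\circ}$ manifestly, because $\Tr(\wt{A}\wt{B})=\Tr(\wt{B}\wt{A})$), it suffices to verify the equalities $(cA)\add B=A\add B$, $\lambda\mul(cA)=\lambda\mul A$ and $\scal{cA,B}_{\circ}=\scal{A,B}_{\circ}$; the companions with $c$ in the second slot then follow at once.

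For the perturbation I would substitute the shift identity into the definition and exploit centrality: $\ce^{\wt{cA}+\wt{B}}=\ce^{(\log c)I_{n}}\ce^{\wt{A}+\wt{B}}=c\,\ce^{\wt{A}+\wt{B}}$. By linearity of the trace the same factor $c$ appears in the denominator $\Tr\ce^{\wt{cA}+\wt{B}}=c\,\Tr\ce^{\wt{A}+\wt{B}}$, so it cancels and $(cA)\add B=A\add B$. The scalar multiplication is handled identically, the only difference being that the central factor is now $\ce^{\lambda(\log c)I_{n}}=c^{\lambda}I_{n}$, which again cancels between numerator and denominator of $\lambda\mul(cA)$.

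The scalar product is the one case that is not a pure cancellation of a common factor, and it is where the specific shape of the definition matters. Setting $s=\log c$ and expanding, $\Tr(\wt{cA}\wt{B})=\Tr(\wt{A}\wt{B})+s\,\Tr\wt{B}$ and $\Tr\wt{cA}=\Tr\wt{A}+ns$. Feeding these into $\scal{cA,B}_{\circ}$ produces, beyond $\scal{A,B}_{\circ}$, two linear-in-$s$ corrections: a term $\tfrac{s}{n}\Tr\wt{B}$ coming from $\tfrac1n\Tr(\wt{cA}\wt{B})$ and a term $-\tfrac{ns}{n^{2}}\Tr\wt{B}$ coming from $-\tfrac{1}{n^{2}}(\Tr\wt{cA})(\Tr\wt{B})$. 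These are exactly opposite and cancel. The only point worth flagging is thus that the coefficients $\tfrac1n$ and $\tfrac1{n^{2}}$ are calibrated precisely so that these corrections annihilate each other; this is the structural reason $\scal{\cdot,\cdot}_{\circ}$ descends to rays of positive definite matrices, and I expect it to be the sole step requiring any care.
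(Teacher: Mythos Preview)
Your proof is correct and is precisely the elementary computation the paper alludes to; the paper's own proof consists solely of the phrase ``Elementary computation.'' Your write-up supplies the details the authors omit, including the observation that the coefficients $\tfrac{1}{n}$ and $\tfrac{1}{n^{2}}$ are exactly what make the $\log c$ contributions in $\scal{\cdot,\cdot}_{\circ}$ cancel.
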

\begin{proof}
Elementary computation.
\end{proof}

Because of scale invariance one can factorize by rays, which means that is enough to consider only the set of trace one
  matrices i.e. the set of quantum states.
It turns out that the interior of the state space is a Euclidean space with these operations.
  
\begin{theorem}
The state space $\MNN{n}$ is a real Hilbert space with addition $\add$, 
  null vector $\frac{1}{n}I_{n}$, multiplication $\mul$ and 
  inner product $\scal{\cdot,\cdot}_{\circ}$.
\end{theorem}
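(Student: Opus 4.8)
The plan is to reduce the whole statement to a linear-algebraic fact by transporting the structure through the centered log-ratio map. Define
$$\clr\colon\MNN{n}\to\TMN{n},\qquad \clr(A)=\wt{A}-\frac{1}{n}\gz{\Tr\wt{A}}\,I_{n}.$$
Since $A\in\MNp{n}$ forces $\wt{A}=\log A\in\MNsa{n}$, the image is self-adjoint, and by construction $\Tr\clr(A)=\Tr\wt{A}-\Tr\wt{A}=0$, so $\clr$ indeed lands in the traceless self-adjoint matrices $\TMN{n}$. The strategy is to show that $\clr$ is a bijection intertwining $\add$ with matrix addition, $\mul$ with scalar multiplication, and $\scal{\cdot,\cdot}_{\circ}$ with a fixed positive multiple of the Hilbert--Schmidt product on $\TMN{n}$; the theorem then follows because $\TMN{n}$ is a finite-dimensional real inner product space.

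First I would establish bijectivity by exhibiting the inverse. For $H\in\TMN{n}$ set $\Psi(H)=\ce^{H}/\Tr\ce^{H}\in\MNN{n}$. A direct computation gives $\log\Psi(H)=H-\log\gz{\Tr\ce^{H}}I_{n}$, and because $H$ is traceless one finds $\clr(\Psi(H))=H$; conversely $\Psi(\clr(A))=A$, since subtracting a multiple of $I_{n}$ from $\wt{A}$ only rescales $\ce^{\wt{A}}=A$ and the normalization absorbs the scalar. Thus $\clr$ and $\Psi$ are mutually inverse.

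Next I would verify the intertwining identities, which are short calculations built on $\wt{A\add B}=\wt{A}+\wt{B}-\log\gz{\Tr\ce^{\wt{A}+\wt{B}}}I_{n}$ and the analogous formula for $\mul$. Subtracting the trace-correction term shows $\clr(A\add B)=\clr(A)+\clr(B)$ and $\clr(\lambda\mul A)=\lambda\,\clr(A)$, while $\clr\gz{\frac{1}{n}I_{n}}=0$ identifies the null vector. For the inner product, expanding $\Tr\gz{\clr(A)\clr(B)}$ yields $\Tr(\wt{A}\wt{B})-\frac{1}{n}(\Tr\wt{A})(\Tr\wt{B})$, so that $\scal{A,B}_{\circ}=\frac{1}{n}\scal{\clr(A),\clr(B)}_{n}$.

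It remains to collect the consequences. Because $\clr$ is a bijection carrying the operations and the distinguished element of $\MNN{n}$ to those of the vector space $\TMN{n}$, the real-vector-space axioms for $(\MNN{n},\add,\mul)$ follow by transport of structure (that $A\add B$ and $\lambda\mul A$ again lie in $\MNN{n}$ was already noted). Since $\scal{X,Y}_{n}=\Tr(XY)$ is a genuine inner product on the real space $\MNsa{n}$---symmetric, bilinear, and positive definite with $\Tr(X^{2})>0$ for $X\neq 0$---so is its restriction to $\TMN{n}$; hence $\frac{1}{n}\scal{\clr(\cdot),\clr(\cdot)}_{n}$ makes $\MNN{n}$ an inner product space, and in particular $\scal{A,A}_{\circ}=\frac{1}{n}\norm{\clr(A)}_{n}^{2}\geq 0$ with equality exactly at $A=\frac{1}{n}I_{n}$. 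Completeness is automatic in finite dimension, so $\MNN{n}$ is a real Hilbert space. I expect the only genuinely delicate point to be the clean verification that $\clr$ and $\Psi$ are mutually inverse, where the trace-corrected exponential normalization must be tracked carefully, whereas positive definiteness and completeness come essentially for free once the isomorphism with $\TMN{n}$ is in place.
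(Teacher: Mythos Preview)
Your argument is correct. The verification that $\clr$ and $\Psi$ are mutual inverses, the intertwining identities, the identity $\scal{A,B}_{\circ}=\frac{1}{n}\scal{\clr(A),\clr(B)}_{n}$, and the appeal to finite-dimensionality for completeness all go through as you describe.

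Your route differs from the paper's. The paper proves the theorem by direct verification of the Hilbert-space axioms on $\MNN{n}$ itself, working with the equivalence $a\simeq b$ (meaning $a=cb$ for some $c>0$) to suppress the normalizing constants; it exhibits only distributivity and additivity of the inner product as representative computations and leaves the rest implicit. The $\clr$ map and the intertwining identities you use are in fact introduced later in the paper, in the section on Gibbs states, but are not invoked for this theorem. Your transport-of-structure argument is cleaner in that it delivers all the axioms at once and makes positive definiteness of $\scal{\cdot,\cdot}_{\circ}$ transparent via the Hilbert--Schmidt form on $\TMN{n}$, whereas the paper's direct approach keeps the logic self-contained at this point in the exposition and postpones the $\clr$ machinery until it is needed for the Gibbs-state interpretation and the orthonormal basis.
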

\begin{proof}
Not every requirement of Hilbert spaces will be proven, just two of them to give insight to such calculations.
Lemma \ref{th:scaleinv} gives the idea to define temporarily the relation $a\simeq b$ 
  if there is a $c\in\Rp$ such that $a=cb$.
  
First let us prove the distributivity for $A,B\in\MNN{n}$ and $\lambda\in\R$.
\begin{align*}
 \lambda\mul(A\oplus B)\simeq  \lambda\mul\ce^{\wt{A}+\wt{B}}
 \simeq \ce^{\lambda(\wt{A}+\wt{B})}
 =\ce^{\lambda \wt{A}+\lambda\wt{B}}
 \simeq \ce^{\lambda \wt{A}}\oplus\ce^{\lambda \wt{B}}
 \simeq (\lambda\mul A)\oplus(\lambda\mul B)
\end{align*}

Now let us prove the additivity of the scalar product for for $A,B,C\in\MNN{n}$.
\begin{align*}
\scal{A\oplus B,C}_{\circ}&=\scal{\ce^{\wt{A}+\wt{B}},C}_{\circ}
=\frac{1}{n}\Tr((\wt{A}+\wt{B})\wt{C})-\frac{1}{n^{2}}(\Tr(\wt{A}+\wt{B})))(\Tr\wt{C})
=\scal{A,C}_{\circ}+\scal{B,C}_{\circ}.
\end{align*}
\end{proof}

Because of the group property of the addition, every state $A$ has an additive inverse, which will be denoted by
  $\ominus A$.
It can be expressed by multiplication as $\ominus A=(-1)\mul A$.

As in the classical case the intervals $(t\mul A)\oplus((1-t)\mul B)$ $(t\in\sz{0,1})$
  has special interests in quantum hypothesis testing \cite{CodingAudenaert,CodingMosonyi,PetzQinf}.
They can be considered as one generalization of Hellinger arc.  

The introduced operations are unitary invariant, which means that they preserve symmetries in quantum mechanics.

\begin{lemma}
\label{th:unitinv}
For every state $A,B\in\MNN{n}$, $\lambda\in\R$ and unitary matrix $U\in\MN{n}$ we have the identities
\begin{align*}
(UAU^{*})\oplus(UBU^{*})&=U(A\oplus B)U^{*}\\
\lambda\mul(UAU^{*})&=U(\lambda\mul A)U^{*}\\
\scal{UAU^{*},UBU^{*}}_{\circ}&=\scal{A,B}_{\circ}.
\end{align*}
\end{lemma}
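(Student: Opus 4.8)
The plan is to reduce all three identities to a single observation: the matrix logarithm and exponential intertwine with unitary conjugation. Concretely, since $UXU^{*}$ and $X$ share the same spectrum and conjugation by $U$ carries an eigenvector $v$ of $X$ to the eigenvector $Uv$ of $UXU^{*}$, the spectral calculus gives $f(UXU^{*})=Uf(X)U^{*}$ for every real function $f$ and every self-adjoint $X$. Specialising to $f=\log$ and $f=\exp$ yields $\wt{UAU^{*}}=U\wt{A}U^{*}$ and $\ce^{UXU^{*}}=U\ce^{X}U^{*}$. Together with the cyclicity of the trace, which gives $\Tr(UYU^{*})=\Tr Y$ for every $Y\in\MN{n}$, these are the only facts I expect to need.

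For the perturbation I would first compute $\wt{UAU^{*}}+\wt{UBU^{*}}=U(\wt{A}+\wt{B})U^{*}$ from the intertwining property, then exponentiate to obtain $\ce^{\wt{UAU^{*}}+\wt{UBU^{*}}}=U\ce^{\wt{A}+\wt{B}}U^{*}$, and finally observe that the normalising trace $\Tr\ce^{\wt{UAU^{*}}+\wt{UBU^{*}}}=\Tr\ce^{\wt{A}+\wt{B}}$ is unchanged by cyclicity. Dividing the conjugated numerator by the unchanged denominator produces exactly $U(A\add B)U^{*}$. The scalar multiplication is handled in the same way, starting from $\lambda\wt{UAU^{*}}=U(\lambda\wt{A})U^{*}$ and again using that the normalising trace is invariant.

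For the inner product I would substitute $\wt{UAU^{*}}=U\wt{A}U^{*}$ and $\wt{UBU^{*}}=U\wt{B}U^{*}$ into the defining formula. The product term collapses to $\Tr\gz{U\wt{A}\wt{B}U^{*}}=\Tr\gz{\wt{A}\wt{B}}$ after the inner $U^{*}U$ cancels and by cyclicity, while each single-trace factor satisfies $\Tr\wt{UAU^{*}}=\Tr\wt{A}$; hence both the $\tfrac1n$ and the $\tfrac1{n^{2}}$ terms match those of $\scal{A,B}_{\circ}$ and the equality follows.

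I do not anticipate a genuine obstacle here: once the functional-calculus identity $f(UXU^{*})=Uf(X)U^{*}$ is recorded, every subsequent step is a routine application of cyclicity. The only point deserving a word of care is that the logarithm and exponential are applied to genuinely positive definite matrices (respectively to their self-adjoint logarithms), so that the spectral calculus is unambiguous; this is guaranteed because $A,B\in\MNN{n}$ are positive definite and conjugation by a unitary preserves both positivity and self-adjointness.
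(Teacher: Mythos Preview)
Your proposal is correct and follows precisely the approach indicated in the paper, which also reduces everything to the single identity $\wt{UAU^{*}}=U\wt{A}U^{*}$ (together with cyclicity of the trace). You have simply made explicit the routine steps the paper leaves implicit.
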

\begin{proof}
Simple application of the formula $\wt{UAU^{*}}=U\wt{A}U^{*}$.
\end{proof}

The tensor product plays a key role in quantum information theory \cite{NielsenChuang,PetzQinf} since
  it describes composite systems.
The following theorem shows that the introduced scalar product acts on composite systems as they were 
  direct summand of subspaces.

\begin{theorem}
\label{th:scaltensor}
For states $A_{1},A_{2}\in\MNN{n}$ and $B_{1},B_{2}\in\MNN{m}$ the
  scalar product of their tensor product can be expressed as
\begin{equation*}
\scal{A_{1}\otimes B_{1},A_{2}\otimes B_{2}}_{\circ}
 =\scal{A_{1},A_{2}}_{\circ}+\scal{B_{1},B_{2}}_{\circ}.
\end{equation*}
\end{theorem}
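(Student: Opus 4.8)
The plan is to reduce everything to Lemma \ref{th:tensorlog}, which turns the logarithm of a tensor product into a sum, and then to expand the bilinear expression defining $\scal{\cdot,\cdot}_{\circ}$ and watch the unwanted cross terms cancel. First I would record, for $i=1,2$, the identities
\[
\wt{A_{i}\otimes B_{i}}=\wt{A_{i}}\otimes I_{m}+I_{n}\otimes\wt{B_{i}},
\]
keeping firmly in mind that the left-hand scalar product lives on $\MNN{nm}$, so its normalising constants are $\tfrac{1}{nm}$ and $\tfrac{1}{(nm)^{2}}$, not $\tfrac{1}{n}$ and $\tfrac{1}{n^{2}}$.

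Next I would compute the two ingredients of $\scal{\cdot,\cdot}_{\circ}$ separately. For the trace of the product I would expand
\[
\gz{\wt{A_{1}}\otimes I_{m}+I_{n}\otimes\wt{B_{1}}}\gz{\wt{A_{2}}\otimes I_{m}+I_{n}\otimes\wt{B_{2}}}
\]
into four terms using $(X\otimes Y)(Z\otimes W)=XZ\otimes YW$, and then apply $\Tr(X\otimes Y)=(\Tr X)(\Tr Y)$. The two diagonal terms yield $m\,\Tr(\wt{A_{1}}\wt{A_{2}})$ and $n\,\Tr(\wt{B_{1}}\wt{B_{2}})$, while the two mixed terms contribute $(\Tr\wt{A_{1}})(\Tr\wt{B_{2}})$ and $(\Tr\wt{A_{2}})(\Tr\wt{B_{1}})$. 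For the product of the two traces I would use $\Tr\wt{A_{i}\otimes B_{i}}=m\,\Tr\wt{A_{i}}+n\,\Tr\wt{B_{i}}$ and simply multiply out, obtaining four products of traces with coefficients $m^{2}$, $mn$, $mn$, $n^{2}$.

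The decisive step, and really the only point requiring attention, is the subtraction. After dividing the first ingredient by $nm$ and the second by $(nm)^{2}$, the mixed contributions $\tfrac{1}{nm}(\Tr\wt{A_{1}})(\Tr\wt{B_{2}})$ and $\tfrac{1}{nm}(\Tr\wt{A_{2}})(\Tr\wt{B_{1}})$ appear with identical coefficients in both pieces and therefore cancel exactly. What remains regroups into
\[
\left[\tfrac{1}{n}\Tr(\wt{A_{1}}\wt{A_{2}})-\tfrac{1}{n^{2}}(\Tr\wt{A_{1}})(\Tr\wt{A_{2}})\right]
+\left[\tfrac{1}{m}\Tr(\wt{B_{1}}\wt{B_{2}})-\tfrac{1}{m^{2}}(\Tr\wt{B_{1}})(\Tr\wt{B_{2}})\right],
\]
which is precisely $\scal{A_{1},A_{2}}_{\circ}+\scal{B_{1},B_{2}}_{\circ}$. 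I do not expect a genuine obstacle: the entire content of the theorem is the observation that the correction term $-\tfrac{1}{N^{2}}(\Tr\,\cdot\,)(\Tr\,\cdot\,)$ built into the definition of $\scal{\cdot,\cdot}_{\circ}$ is calibrated exactly so as to annihilate the cross terms produced by the additive splitting of the tensor-product logarithm, leaving a clean orthogonal-sum structure.
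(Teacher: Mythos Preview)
Your argument is correct and is precisely the ``elementary computation using Lemma~\ref{th:tensorlog}'' that the paper invokes as its proof; you have simply written out the details that the paper omits. The paper later offers a second proof via the centered log-ratio map $\clr$ and Theorem~\ref{th:clrproperties}, but that alternative route is just a repackaging of the same computation, with the trace-correction term absorbed into the definition of $\clr$.
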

\begin{proof}
Elementary computation using Lemma \ref{th:tensorlog}.
\end{proof}

An immediate consequence of the previous theorem is the Pythagorean theorem for composite quantum systems.  

\begin{theorem}
For states $A\in\MNN{n}$ and $B\in\MNN{m}$ we have
\begin{equation*}
\norm{A\otimes B}_{\circ}^{2}=\norm{A}_{\circ}^{2}+\norm{B}_{\circ}^{2}.
\end{equation*}
\end{theorem}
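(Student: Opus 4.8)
The plan is to recognize this as a direct specialization of Theorem \ref{th:scaltensor}. Since the norm $\norm{\cdot}_{\circ}$ is the one induced by the inner product $\scal{\cdot,\cdot}_{\circ}$, by definition $\norm{X}_{\circ}^{2}=\scal{X,X}_{\circ}$ for any state $X$. The whole statement will therefore follow by setting the two factors equal in the tensor-product formula for the scalar product.

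First I would check that the left-hand side even makes sense, i.e.\ that $A\otimes B$ is a legitimate state in $\MNN{nm}$. This is immediate: $A\otimes B$ is positive definite as a tensor product of positive definite matrices, and $\Tr(A\otimes B)=(\Tr A)(\Tr B)=1$, so $A\otimes B\in\MNN{nm}$ and $\norm{A\otimes B}_{\circ}$ is well defined.

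Next I would apply Theorem \ref{th:scaltensor} with the choices $A_{1}=A_{2}=A$ and $B_{1}=B_{2}=B$. This yields
\begin{equation*}
\norm{A\otimes B}_{\circ}^{2}=\scal{A\otimes B,A\otimes B}_{\circ}=\scal{A,A}_{\circ}+\scal{B,B}_{\circ}=\norm{A}_{\circ}^{2}+\norm{B}_{\circ}^{2},
\end{equation*}
which is exactly the claim.

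There is essentially no obstacle here: the content is entirely carried by Theorem \ref{th:scaltensor}, whose proof in turn rests on the additivity of the logarithm under tensor products established in Lemma \ref{th:tensorlog}. The only points requiring (trivial) care are that the tensor product of two states is again a state, so that the left-hand norm is defined, and that $\norm{\cdot}_{\circ}$ is a genuine norm, which is guaranteed by the earlier theorem asserting that $\MNN{n}$ with $\add$, $\mul$ and $\scal{\cdot,\cdot}_{\circ}$ is a real Hilbert space.
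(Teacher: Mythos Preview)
Your proof is correct and matches the paper's approach exactly: the paper states this result as an ``immediate consequence'' of Theorem \ref{th:scaltensor} without further elaboration, and you have simply written out that specialization (setting $A_{1}=A_{2}=A$, $B_{1}=B_{2}=B$) explicitly.
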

 
The relative entropy is a key concept in quantum information theory \cite{NielsenChuang,PetzQinf},
  which can be defined via modular operators for states $D_{1},D_{2}\in\MNN{n}$ as
\begin{equation*}
S(D_{1},D_{2})=-\scal{D_{1}^{1/2},\wt{\Delta_{D_{2}/D_{1}}}D_{1}^{1/2}}_{n}.
\end{equation*}
This formula is nothing else but Araki's definition of the relative entropy in a general von Neumann algebra \cite{Araki}.
Later Petz used the concept of modular operators to define quasi entropies \cite{PetzQuasi} and
  in 2002 Hiai and Petz presented a unified approach to quantum relative entropies, monotone metrics and quasi entropies
  via modular operators \cite{PetzHiai}.
Note that the logarithm of the modular operator occurs in the formula.
It turns out that the scalar product $\scal{\cdot,\cdot}_{\circ}$ can be expressed in terms of the logarithm of 
  modular operators.

\begin{theorem}
For states $A,B\in\MNN{n}$ the scalar product can be written as
\begin{equation*}
\scal{A,B}_{\circ}=\frac{1}{2n^{2}}\scal{ \wt{\Delta_{A}},\wt{\Delta_{B} }}.
\end{equation*}
\end{theorem}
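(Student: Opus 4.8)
The plan is to push everything onto the Hilbert--Schmidt inner product of $\MN{n}$ by writing $\wt{\Delta_{A}}$ and $\wt{\Delta_{B}}$ as differences of left and right multiplications, so that both sides of the claimed identity become combinations of traces of products of such multiplications. First I would rewrite the logarithm of the modular operator. Since $\Delta_{A}=\Lm{A}\Rm{A^{-1}}$, and under the identification used in the proof of Lemma \ref{th:tensorlog} this operator is precisely $A\otimes A^{-1}$, the second identity of Lemma \ref{th:tensorlog} gives
\begin{equation*}
\wt{\Delta_{A}}=\wt{A}\otimes I_{n}-I_{n}\otimes\wt{A}=\Lm{\wt{A}}-\Rm{\wt{A}},
\end{equation*}
and similarly $\wt{\Delta_{B}}=\Lm{\wt{B}}-\Rm{\wt{B}}$. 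As $\wt{A}$ and $\wt{B}$ are self-adjoint, the four operators $\Lm{\wt{A}},\Rm{\wt{A}},\Lm{\wt{B}},\Rm{\wt{B}}$ are self-adjoint on $\MN{n}$, so expanding the bilinear form yields
\begin{equation*}
\scal{\wt{\Delta_{A}},\wt{\Delta_{B}}}
=\scal{\Lm{\wt{A}},\Lm{\wt{B}}}-\scal{\Lm{\wt{A}},\Rm{\wt{B}}}-\scal{\Rm{\wt{A}},\Lm{\wt{B}}}+\scal{\Rm{\wt{A}},\Rm{\wt{B}}}.
\end{equation*}

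The next step is to evaluate these four Hilbert--Schmidt traces over the operator space $\mathrm{End}(\MN{n})$. Working in the matrix unit basis $\kz{E_{ij}}$ of $\MN{n}$, which is orthonormal for $\scal{\cdot,\cdot}_{n}$, and using the elementary relation $E_{ji}XE_{ij}=X_{ii}E_{jj}$, one obtains for self-adjoint $X,Y$ the two key identities
\begin{align*}
\scal{\Lm{X},\Lm{Y}}&=\scal{\Rm{X},\Rm{Y}}=n\,\Tr(XY),\\
\scal{\Lm{X},\Rm{Y}}&=\scal{\Rm{X},\Lm{Y}}=(\Tr X)(\Tr Y).
\end{align*}
Substituting $X=\wt{A}$ and $Y=\wt{B}$ into the expansion above gives
\begin{equation*}
\scal{\wt{\Delta_{A}},\wt{\Delta_{B}}}=2n\,\Tr(\wt{A}\wt{B})-2(\Tr\wt{A})(\Tr\wt{B}),
\end{equation*}
and multiplying by $\frac{1}{2n^{2}}$ reproduces exactly $\frac{1}{n}\Tr(\wt{A}\wt{B})-\frac{1}{n^{2}}(\Tr\wt{A})(\Tr\wt{B})$, which is the definition of $\scal{A,B}_{\circ}$.

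The only genuine work lies in the second step, the evaluation of the four operator traces; the rest is formal. The point to get right is the asymmetry between the \emph{same-side} pairings, which carry a factor $n$ because $\Lm{X}\Lm{Y}=\Lm{XY}$ forces $\scal{\Lm{X},\Lm{Y}}=n\,\Tr(XY)$, and the \emph{opposite-side} pairings $\Lm{X}\Rm{Y}$, whose trace factorizes as $(\Tr X)(\Tr Y)$. A clean way to see both identities at once is to pass to the vectorization isomorphism $\MN{n}\cong\mathbb{C}^{n}\otimes\mathbb{C}^{n}$, under which $\Lm{X}$ becomes $X\otimes I_{n}$ and $\Rm{Y}$ becomes $I_{n}\otimes Y^{T}$, so that the Hilbert--Schmidt trace factorizes across the two tensor legs. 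This asymmetry between the coefficients $n$ and $1$ is precisely what generates the two distinct normalizing constants $\frac{1}{n}$ and $\frac{1}{n^{2}}$ appearing in $\scal{\cdot,\cdot}_{\circ}$, and it is the crux of the computation.
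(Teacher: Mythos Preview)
Your proof is correct and follows essentially the same route as the paper. The paper simply rewrites the right-hand side as $\frac{1}{2n^{2}}\Tr\bigl((\wt{A\otimes A^{-1}})(\wt{B\otimes B^{-1}})\bigr)$ and invokes Lemma~\ref{th:tensorlog}; your decomposition $\wt{\Delta_{A}}=\Lm{\wt{A}}-\Rm{\wt{A}}$ is exactly that lemma under the paper's identification $A\otimes B\leftrightarrow\Lm{A}\Rm{B^{*}}$, and your four trace identities are the operator-language version of the tensor trace factorization $\Tr(X\otimes Y)=(\Tr X)(\Tr Y)$ that the paper leaves implicit.
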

\begin{proof}
The theorem is equivalent to the equality
\begin{equation}
\label{eq:scalwithmodular}
\scal{A,B}_{\circ}=\frac{1}{2n^{2}}\Tr\gz{\gz{\wt{A\otimes A^{-1}}} \gz{\wt{B\otimes B^{-1}}}},
\end{equation}
which can proved using Lemma \ref{th:tensorlog}.
\end{proof}

It is worth noting the remarkable similarities of the Equations (\ref{eq:oscalCl}) and (\ref{eq:scalwithmodular}).

\subsection{Aitchison geometry on the space of qubits}

As an application we present the Aitchison geometry on the space of qubits $\MNN{2}$ via explicit formulae.
The state space $\MNN{2}$ can be identified with the interior of the unit ball in the three dimensional Euclidean
  space with the map
\begin{equation*}
 \kz{(x,y,z)\in\R^{3}\vert\ \norm{(x,y,z)}<1}\to\MNN{2}\qquad
   (x,y,z)\mapsto\frac{1}{2}\begin{pmatrix}1+z & x+\ci y\\ x-\ci y & 1-z\end{pmatrix}.
\end{equation*}
This unit ball is called Bloch ball and its elements are referred to as quantum states.
The origin of the Bloch ball is the origin in the Aitchison geometry.
The Hilbert space operations on the space $\MNN{2}$ is summarized in the following lemma.

\begin{lemma}
\label{th:qubit} 
Assume that $D_{1}$ and $D_{2}$ are elements in the Bloch ball, such that their Euclidean distance from the
  origin is $R,r$ and the angle between them is $\vartheta$.
We have for their scalar product
\begin{equation}
\label{eq:qubitscal}
 \scal{D_{1},D_{2}}_{\circ}=\artanh(r)\artanh(R)\cos(\vartheta),
\end{equation}
  the length of the vectors are
\begin{equation}
\label{eq:qubitnorm}
 \norm{D_{1}}_{\circ}=\artanh(R), \quad \norm{D_{2}}_{\circ}=\artanh(r),
\end{equation}
  so the angle between them in the Aitchison geometry is $\vartheta$ too.
The distance square between them is
\begin{equation}
\label{eq:qubitdistance}
 \norm{D_{1}\ominus D_{2}}_{\circ}^{2}=\artanh^{2}(R)+\artanh^{2}(r)-2\cos(\vartheta)\artanh(R)\artanh(r).
\end{equation}
The additive inverse is
\begin{equation}
\label{eq:qubitinverse}
 \ominus D_{1}=I_{2}-D_{1},
\end{equation}
  that is the reflection to the origin in the Bloch ball.
The multiplication is a dilatation, for $\lambda\in\Rp$ the Euclidean distance the state $\lambda\mul D_{1}$ from
  the origin in the Bloch ball is $\tanh(\lambda\artanh(R))$.
\end{lemma}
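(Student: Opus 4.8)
The plan is to reduce every claim to a single computation: the expression of $\wt{D}=\log D$ in the Pauli basis. Writing a Bloch-ball element as $D=\frac{1}{2}\gz{I_{2}+\vec{v}\cdot\vec{\sigma}}$ with $\vec{v}=R\hat{n}$, $\norm{\hat{n}}=1$, in terms of the Pauli matrices $\vec{\sigma}=(\sigma_{1},\sigma_{2},\sigma_{3})$, the eigenvalues of $D$ are $\frac{1\pm R}{2}$ with eigenprojections $\frac{1}{2}\gz{I_{2}\pm\hat{n}\cdot\vec{\sigma}}$. Hence I would first establish
\begin{equation*}
\wt{D}=\tfrac{1}{2}\log\tfrac{1-R^{2}}{4}\,I_{2}+\artanh(R)\,\hat{n}\cdot\vec{\sigma},
\end{equation*}
splitting $\log D$ into its scalar part $\frac{1}{2}\log\frac{(1+R)(1-R)}{4}$ and its traceless part, where $\frac{1}{2}\log\frac{1+R}{1-R}=\artanh(R)$. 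Everything downstream uses only the Pauli algebra $\gz{\vec{a}\cdot\vec{\sigma}}\gz{\vec{b}\cdot\vec{\sigma}}=\gz{\vec{a}\cdot\vec{b}}I_{2}+\ci\gz{\vec{a}\times\vec{b}}\cdot\vec{\sigma}$, the trace facts $\Tr I_{2}=2$ and $\Tr\gz{\hat{n}\cdot\vec{\sigma}}=0$, and the exponential identity $\ce^{t\,\hat{n}\cdot\vec{\sigma}}=\cosh(t)\,I_{2}+\sinh(t)\,\hat{n}\cdot\vec{\sigma}$.

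For the scalar product I would write $\wt{D_{1}}=\alpha_{1}I_{2}+\artanh(R)\,\hat{n}_{1}\cdot\vec{\sigma}$ and $\wt{D_{2}}=\alpha_{2}I_{2}+\artanh(r)\,\hat{n}_{2}\cdot\vec{\sigma}$ with $\alpha_{i}$ the scalar parts above. Then $\Tr\wt{D_{i}}=2\alpha_{i}$ and, by the Pauli product rule together with $\hat{n}_{1}\cdot\hat{n}_{2}=\cos\vartheta$,
\begin{equation*}
\Tr\gz{\wt{D_{1}}\wt{D_{2}}}=2\alpha_{1}\alpha_{2}+2\artanh(R)\artanh(r)\cos\vartheta.
\end{equation*}
Substituting into $\scal{D_{1},D_{2}}_{\circ}=\frac{1}{2}\Tr\gz{\wt{D_{1}}\wt{D_{2}}}-\frac{1}{4}\gz{\Tr\wt{D_{1}}}\gz{\Tr\wt{D_{2}}}$ makes the $\alpha_{1}\alpha_{2}$ contributions cancel, leaving (\ref{eq:qubitscal}). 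Equation (\ref{eq:qubitnorm}) is the special case $D_{1}=D_{2}$, the preservation of the angle is the immediate Cauchy--Schwarz quotient $\scal{D_{1},D_{2}}_{\circ}/(\norm{D_{1}}_{\circ}\norm{D_{2}}_{\circ})=\cos\vartheta$, and (\ref{eq:qubitdistance}) follows by expanding $\norm{D_{1}\ominus D_{2}}_{\circ}^{2}=\norm{D_{1}}_{\circ}^{2}+\norm{D_{2}}_{\circ}^{2}-2\scal{D_{1},D_{2}}_{\circ}$, which is legitimate because $\ominus$ is the Hilbert-space inverse and $\scal{\cdot,\cdot}_{\circ}$ is bilinear.

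For the remaining two claims I would compute the relevant matrix exponentials, discarding positive scalar prefactors since the trace normalization removes them (the relation $\simeq$ of the Hilbert-space proof). Since $-\wt{D_{1}}=-\alpha_{1}I_{2}-\artanh(R)\,\hat{n}_{1}\cdot\vec{\sigma}$, the exponential identity with $\cosh\gz{\artanh R}=\gz{1-R^{2}}^{-1/2}$ and $\sinh\gz{\artanh R}=R\gz{1-R^{2}}^{-1/2}$ gives $\ce^{-\wt{D_{1}}}\simeq I_{2}-R\,\hat{n}_{1}\cdot\vec{\sigma}$; as this has trace $2$, normalizing yields $\ominus D_{1}=(-1)\mul D_{1}=\frac{1}{2}\gz{I_{2}-\vec{v}\cdot\vec{\sigma}}=I_{2}-D_{1}$, which is (\ref{eq:qubitinverse}) and is manifestly the reflection $\vec{v}\mapsto-\vec{v}$. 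The same identity applied to $\lambda\wt{D_{1}}$ gives $\lambda\mul D_{1}=\frac{1}{2}\gz{I_{2}+\tanh\gz{\lambda\artanh R}\,\hat{n}_{1}\cdot\vec{\sigma}}$, so for $\lambda\in\Rp$ its Bloch radius is $\tanh\gz{\lambda\artanh R}$, proving the dilatation statement.

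The one genuinely delicate step is the first: getting $\wt{D}$ correct, in particular identifying the traceless coefficient as $\artanh(R)$ and keeping the scalar part $\frac{1}{2}\log\frac{1-R^{2}}{4}$ cleanly separated. Once that is in hand, the scale invariance of Lemma \ref{th:scaleinv} lets me drop every scalar prefactor before the final trace normalization, so no eigenvector bookkeeping survives and the remaining algebra is entirely routine.
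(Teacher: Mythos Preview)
Your argument is correct and complete. The single key formula $\wt{D}=\tfrac{1}{2}\log\tfrac{1-R^{2}}{4}\,I_{2}+\artanh(R)\,\hat{n}\cdot\vec{\sigma}$ is derived correctly from the spectral decomposition, and every subsequent claim follows from it via the Pauli identities exactly as you describe; the cancellation of the $\alpha_{1}\alpha_{2}$ terms in the scalar product, the exponential computations for $\ominus D_{1}$ and $\lambda\mul D_{1}$, and the hyperbolic identities are all sound.

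Your route differs from the paper's. The paper first invokes unitary invariance (Lemma~\ref{th:unitinv}) to rotate $D_{1}$ onto the $z$-axis and $D_{2}$ into the $xz$-plane, and then computes with the explicit $2\times 2$ matrices $D_{1}(R)$ and $D_{2}(r,\vartheta)$; you instead work coordinate-free in the Pauli basis from the outset. What your approach buys is transparency: the preservation of the Bloch angle is immediate because the traceless part of $\wt{D}$ points along $\hat{n}$, and the dilatation statement is visibly a radial rescaling $R\mapsto\tanh(\lambda\artanh R)$ with direction fixed. The paper's approach, by contrast, avoids the Pauli algebra entirely and is accessible to a reader who has not seen the identity $(\vec{a}\cdot\vec{\sigma})(\vec{b}\cdot\vec{\sigma})=(\vec{a}\cdot\vec{b})I_{2}+\ci(\vec{a}\times\vec{b})\cdot\vec{\sigma}$, at the cost of an appeal to unitary invariance and a less symmetric calculation.
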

\begin{proof}
The unitary invariance of the metric implies that the norm of a state in the Bloch ball depends
  just on the distance from the origin. 
To simplify the calculations we can assume that the first state lays on the positive part 
  of the axis $z$ and the second state has no $y$ component.
Using the parameterization for states
\begin{equation*}
 D_{1}(R)=\frac{1}{2}\begin{pmatrix}1+R&0\\ 0&1-R\end{pmatrix}
 \qquad
 D_{2}(r,\vartheta)=\frac{1}{2}
   \begin{pmatrix}1+r\cos\vartheta & r\sin\vartheta\\ r\sin\vartheta & 1-r\cos\vartheta\end{pmatrix}
 \end{equation*}
  calculations gives us Equations (\ref{eq:qubitscal},\ref{eq:qubitnorm},\ref{eq:qubitdistance},\ref{eq:qubitinverse}).
\end{proof}  

Finally an orthonormal basis $D_{1},D_{2},D_{3}$ is presented in $\MNN{2}$ with respect to the scalar product
  $\scal{\cdot,\cdot}_{\circ}$.
\begin{equation*}
D_{1}=\frac{1}{2}\begin{pmatrix}1&\tanh 1\\ \tanh1 &1\end{pmatrix}\qquad
D_{2}=\frac{1}{2}\begin{pmatrix}1&\ci\tanh 1\\ -\ci\tanh1 &1\end{pmatrix}\qquad
D_{3}=\frac{1}{\ce+\ce^{-1}}\begin{pmatrix}\ce&0\\ 0 &\ce^{-1}\end{pmatrix}
\end{equation*}

\section{Connection to Gibbs states}

\begin{definition}
In quantum mechanical setting an operator $H\in\MNsa{n}$ is called Hamiltonian and
  for a parameter $\beta\in\Rp$ the state
\begin{equation*}
D=\frac{\ce^{-\beta H}}{\Tr\ce^{-\beta H}}
\end{equation*}
is called Gibbs state of $H$ at inverse temperature $\beta$, or shortly Gibbs state.
\end{definition}

Note that the Hamilton operators $H$ and $H+cI_{n}$ determine the same Gibbs state, so one can assume that the
  Hamiltonian is traceless.
Using the definition of Gibbs states one can assign a state to every Hamiltonian. 
The counterpart of this mechanism in classical probability theory is called softmax function.
Now we define the inverse of this map.

\begin{definition}
 The centered log-ratio transformation is defined as
 \begin{equation*}
  \clr: \MNN{n}\to\TMN{n}\qquad D\mapsto \Tilde{D}-\frac{1}{n}I_{n}\Tr\Tilde{D}.
 \end{equation*}
\end{definition}

Below we show that the centered log-ratio transformation is scale invariant, linear, isometric and 
  preserve tensor product in some sense.
  
\begin{theorem}
\label{th:clrproperties}
For states $A,B\in\MNN{n}$, $C\in\MNN{m}$, scalars $\lambda\in\R$, $c\in\Rp$ 
  and unitary operator $U\in\MN{n}$ we have
\begin{align*}
\clr(cA)&=\clr A\\
\clr(\lambda\mul A)&=\lambda\clr A\\
\clr(A\add B)&=\clr A+\clr B\\
\scal{A,B}_{\circ}&=\frac{1}{n}\scal{\clr A,\clr B}_{n}\\
\clr(A\otimes C)&=\clr A\otimes I_{m}+I_{n}\otimes\clr C\\
\clr(UAU^{*})&=U(\clr A)U^{*}.
\end{align*}
\end{theorem}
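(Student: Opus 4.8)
The plan is to establish the six identities in order, proving scale invariance first because it removes the normalizing denominators hidden inside $\add$ and $\mul$ and thereby reduces the next two identities to the linearity of $\log$ and of the trace.

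First I would prove $\clr(cA)=\clr A$ by direct computation: since $\wt{cA}=(\log c)I_{n}+\wt{A}$ and $\Tr\wt{cA}=n\log c+\Tr\wt{A}$, the scalar contribution $(\log c)I_{n}$ cancels against $\frac{1}{n}I_{n}\Tr\wt{cA}$, leaving exactly $\clr A$. With scale invariance available, homogeneity and additivity become immediate: discarding the trace denominators gives $\clr(\lambda\mul A)=\clr(\ce^{\lambda\wt{A}})$ and $\clr(A\add B)=\clr(\ce^{\wt{A}+\wt{B}})$, and because $\log\ce^{X}=X$ on self-adjoint $X$ while both the trace and the map $X\mapsto X-\frac{1}{n}I_{n}\Tr X$ are linear, these collapse to $\lambda\clr A$ and $\clr A+\clr B$ respectively.

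For the isometry I would use that $\clr A$ is self-adjoint (because $\wt{A}$ is), so $\scal{\clr A,\clr B}_{n}=\Tr((\clr A)(\clr B))$. Writing $a=\Tr\wt{A}$, $b=\Tr\wt{B}$ and expanding $(\wt{A}-\frac{a}{n}I_{n})(\wt{B}-\frac{b}{n}I_{n})$, the two cross terms each contribute $-ab/n$ after the trace while the $I_{n}$-term contributes $+ab/n$, so $\Tr((\clr A)(\clr B))=\Tr(\wt{A}\wt{B})-ab/n$; dividing by $n$ reproduces the definition of $\scal{A,B}_{\circ}$.

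The tensor identity is where I expect the only genuine bookkeeping. By Lemma \ref{th:tensorlog} we have $\wt{A\otimes C}=\wt{A}\otimes I_{m}+I_{n}\otimes\wt{C}$, hence $\Tr\wt{A\otimes C}=m\Tr\wt{A}+n\Tr\wt{C}$. The identity on the $nm$-dimensional space is $I_{n}\otimes I_{m}$, so the subtracted term $\frac{1}{nm}(I_{n}\otimes I_{m})\Tr\wt{A\otimes C}$ splits into $\frac{1}{n}(I_{n}\otimes I_{m})\Tr\wt{A}+\frac{1}{m}(I_{n}\otimes I_{m})\Tr\wt{C}$. Matching this against the expansion of $\clr A\otimes I_{m}+I_{n}\otimes\clr C$, via $(I_{n}\Tr\wt{A})\otimes I_{m}=(I_{n}\otimes I_{m})\Tr\wt{A}$, shows the two sides coincide term by term; the care required is in tracking the two distinct prefactors $1/n$ and $1/m$ and checking that they recombine correctly. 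Finally, unitary invariance is routine from $\wt{UAU^{*}}=U\wt{A}U^{*}$ together with $\Tr(U\wt{A}U^{*})=\Tr\wt{A}$ and $U(\frac{1}{n}I_{n})U^{*}=\frac{1}{n}I_{n}$.
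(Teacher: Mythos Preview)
Your proof is correct and is essentially an explicit elaboration of what the paper only sketches: the paper's own proof is the single line ``Simple application of Lemmas \ref{th:tensorlog}, \ref{th:scaleinv} and \ref{th:unitinv}.'' Your organization---establishing the scale invariance of $\clr$ first and then using it to strip the normalizing denominators from $\add$ and $\mul$---is exactly the natural way to unpack that hint, and your detailed verifications of the isometry and the tensor identity match the intended computations.
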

\begin{proof}
Simple application of Lemmas \ref{th:tensorlog}, \ref{th:scaleinv} and \ref{th:unitinv}.
\end{proof}

This theorem gives an interpretation of the vector operations in the state space.
The sum of Gibbs states $D_{1}\oplus D_{2}$ generated by Hamiltonians $H_{1}$ and $H_{2}$ at the same
  inverse temperature $\beta$ is just the Gibbs state corresponding to the Hamiltonian $H_{1}+H_{2}$. 
The state $\lambda\mul D_{1}$ ($\lambda\in\R$) can be interpreted as a Gibbs state generated by either 
  $\lambda H_{1}$ at $\beta$ or $H_{1}$ at $\lambda\beta$.
Finally, $\scal{D_{1},D_{2}}$ is just the Hilbert-Schmidt inner product of the Hamiltonians
  $\scal{H_{1},H_{2}}$ up to a normalizing factor.
  
The above mentioned properties of the $\clr$ function indicate a simple proof for Theorem \ref{th:scaltensor}.
For states $A_{1},A_{2}\in\MNN{n}$ and $B_{1},B_{2}\in\MNN{m}$ the calculations below prove the theorem.
\begin{align*}
\scal{A_{1}\otimes B_{1},A_{2}\otimes B_{2}}_{\circ}
 &=\frac{1}{nm}\scal{\clr(A_{1}\otimes B_{1}),\clr(A_{2}\otimes B_{2})}_{n} \\
&=\frac{1}{nm}\scal{\clr A_{1}\otimes I_{m}+I_{n}\otimes\clr{B_{1}},
   \clr A_{2}\otimes I_{m}+I_{n}\otimes\clr{B_{2}} }_{n} \\
&=\frac{1}{nm}\bigl(\Tr((\clr{A_{1}}\clr{A_{2})}\otimes I_{m})+\Tr(\clr{A_{1}\otimes \clr B_{2}})\\
&\qquad  +\Tr(\clr A_{2}\otimes\clr B_{1})+\Tr(I_{n}\otimes(\clr B_{1}\clr B_{2})) \bigr)\\
&=\frac{1}{n}\Tr(\clr A_{1}\clr A_{2})+\frac{1}{nm}\Tr(\clr A_{1})\Tr(\clr B_{2})\\
&\qquad  +\frac{1}{nm}\Tr(\clr A_{2})\Tr(\clr B_{1})+\frac{1}{m}\Tr(\clr B_{1}\clr B_{2})\\
&=\frac{1}{n}\scal{\clr(A_{1}),\clr(A_{2})}_{n}+\frac{1}{m}\scal{\clr(B_{1}),\clr(B_{2})}_{n}\\
&=\scal{A_{1},A_{2}}_{\circ}+\scal{B_{1},B_{2}}_{\circ}
\end{align*}

Aitchison et al. gave an orthonormal basis on the probability simplex in 2002 \cite{AitchisonONB}. 
Here we present the noncommutative analogue of their basis.

\begin{theorem}
For a given $n$ define $a=\sqrt{\frac{n}{2}}$ and the following matrices.
\begin{align*}
&1\leq k<l\leq n:&\quad
  A_{kl}&=\frac{1}{n-2+2\cosh a}\gz{I_{n}+(\cosh a-1)(E_{kk}+E_{ll})+(\sinh a)(E_{kl}+E_{lk})}\\
&1\leq k<l\leq n:&\quad
  B_{kl}&=\frac{1}{n-2+2\cosh a}\gz{I_{n}+(\cosh a-1)(E_{kk}+E_{ll})+(\ci\sinh a)(E_{kl}-E_{lk})}\\
&1\leq k\leq n-2:&\quad
  C_{k}&=\frac{1}{(k+1)\ce^{\alpha}+\ce^{-(k+1)\alpha}+k-2 }
\Diag\gz{\underbrace{\ce^{\alpha},\dots,\ce^{\alpha}}_{k},\ce^{-(k+1)\alpha},\underbrace{1,\dots,1}_{n-k-2},\ce^{\alpha}}\\
& &\quad  &\mbox{where}\quad \alpha=\sqrt{\frac{n}{k^{2}+3k+2}}\\
&k=n-1:&\quad 
   C_{n-1}&=\frac{1}{n-a+2\cosh a}\gz{I_{n}+(\ce^{a}-1)E_{11}+(\ce^{-a}-1)E_{nn}}
\end{align*}
The set of matrices $\kz{A_{kl},B_{kl}}_{1\leq k<l\leq n}\cup\kz{C_{k}}_{1\leq k\leq n-1}$ form an orthonormal basis
  in $\MNN{n}$ with respect to the scalar product $\scal{\cdot,\cdot}_{\circ}$.
\end{theorem}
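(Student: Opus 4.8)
The plan is to transport the whole computation into the Hilbert--Schmidt geometry of $\TMN{n}$ by means of the centered log-ratio transformation. By Theorem~\ref{th:clrproperties} the map $\clr$ is additive and homogeneous with respect to $\add,\mul$, it carries $\MNN{n}$ bijectively onto $\TMN{n}$ (its inverse being the Gibbs-state assignment), and it satisfies $\scal{A,B}_{\circ}=\frac{1}{n}\scal{\clr A,\clr B}_{n}$; in particular $\frac{1}{\sqrt{n}}\clr$ is an isometry onto $\TMN{n}$. Hence it is enough to prove that the $\clr$-images of the listed matrices satisfy $\scal{\clr D_{i},\clr D_{j}}_{n}=n\,\delta_{ij}$. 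Furthermore the family contains $2\binom{n}{2}+(n-1)=n^{2}-1$ members, which is exactly $\dim_{\R}\TMN{n}$, so any orthonormal system of this size is automatically a basis and no spanning argument is required.

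First I would evaluate $\clr$ on each matrix. By the scale invariance of $\clr$ (Lemma~\ref{th:scaleinv}) every normalizing scalar may be discarded, so only the logarithm of the bracketed matrix, reduced to its traceless part, has to be found. For $A_{kl}$ and $B_{kl}$ the bracket is the identity outside the rows and columns $k,l$ and on the $\kz{k,l}$ block it equals $\gz{\begin{smallmatrix}\cosh a & \sinh a\\ \sinh a & \cosh a\end{smallmatrix}}$, respectively $\gz{\begin{smallmatrix}\cosh a & \ci\sinh a\\ -\ci\sinh a & \cosh a\end{smallmatrix}}$. Since the square of $a(E_{kl}+E_{lk})$ and of $\ci a(E_{kl}-E_{lk})$ both equal $a^{2}(E_{kk}+E_{ll})$, these brackets are precisely the exponentials of $a(E_{kl}+E_{lk})$ and of $\ci a(E_{kl}-E_{lk})$; both exponents are already traceless, whence $\clr A_{kl}=a(E_{kl}+E_{lk})$ and $\clr B_{kl}=\ci a(E_{kl}-E_{lk})$. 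For the diagonal matrices the logarithm is taken entrywise and is again already traceless, giving $\clr C_{n-1}=a(E_{11}-E_{nn})$ and $\clr C_{k}=\alpha\,v_{k}$, where $v_{k}$ is the diagonal matrix carrying $1$ in the positions $1,\dots,k$ and $n$, the entry $-(k+1)$ in position $k+1$, and $0$ elsewhere.

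Then I would check orthonormality group by group. The images of the $A_{kl},B_{kl}$ are off-diagonal while those of the $C$'s are diagonal, so these two groups are Hilbert--Schmidt orthogonal for free. Within the off-diagonal group, matrix units belonging to distinct index pairs $\kz{k,l}$ have disjoint support and are therefore orthogonal, while for a fixed pair a one-line trace computation yields $\scal{E_{kl}+E_{lk},\ci(E_{kl}-E_{lk})}_{n}=0$ together with $\norm{E_{kl}+E_{lk}}_{n}^{2}=\norm{\ci(E_{kl}-E_{lk})}_{n}^{2}=\Tr(E_{kk}+E_{ll})=2$, so after the factor $a^{2}=n/2$ each off-diagonal image has squared norm $n$. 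Likewise $\norm{a(E_{11}-E_{nn})}_{n}^{2}=2a^{2}=n$, and $E_{11}-E_{nn}$ is orthogonal to every $v_{k}$ because $v_{k}$ has the same entry $1$ in positions $1$ and $n$.

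Finally, the genuinely computational step---and the one I expect to be the main obstacle---is the diagonal block $C_{1},\dots,C_{n-2}$. Here a careful bookkeeping of the overlapping ``staircase'' patterns is needed: for $j<k$ the diagonal inner product $\scal{v_{j},v_{k}}_{n}$ receives the contribution $j$ from the common leading ones, the contribution $-(j+1)$ from the entry of $v_{j}$ in position $j+1$ matched against the $1$ of $v_{k}$, and the contribution $1$ from the common trailing position $n$, which telescopes to $j-(j+1)+1=0$; and $\scal{v_{k},v_{k}}_{n}=k+(k+1)^{2}+1=k^{2}+3k+2$. Multiplying by $\alpha^{2}=n/(k^{2}+3k+2)$ turns this into squared norm $n$. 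Collecting all these evaluations gives $\scal{\clr D_{i},\clr D_{j}}_{n}=n\,\delta_{ij}$ throughout the family, and by the isometry $\frac{1}{\sqrt{n}}\clr$ this is exactly the required orthonormality, completing the proof.
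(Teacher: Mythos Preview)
Your proposal is correct and follows exactly the same route as the paper: you transport the problem to $\TMN{n}$ via $\clr$, compute that the $\clr$-images are $a(E_{kl}+E_{lk})$, $\ci a(E_{kl}-E_{lk})$, $\alpha v_{k}$ and $a(E_{11}-E_{nn})$ (the paper calls these the associated Hamiltonians), and then verify that these form an orthonormal system of the right cardinality with respect to $\frac{1}{n}\scal{\cdot,\cdot}_{n}$. The only difference is cosmetic: the paper dismisses the orthonormality verification as ``simple computation'', whereas you carry it out explicitly, including the telescoping cancellation $j-(j+1)+1=0$ for the diagonal family.
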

\begin{proof}
 The matrices $A_{kl},B_{kl}$ and $C_{k}$ are the Gibbs states generated by Hamiltonians
\begin{align*}
&1\leq k<l\leq n;&\quad
  \mathcal{A}_{kl}&=a(E_{kl}+E_{lk})\\
&1\leq k<l\leq n;&\quad
  \mathcal{B}_{kl}&=\ci a(E_{kl}-E_{lk})\\
&1\leq k\leq n-2:&\quad
  \mathcal{C}_{k}&=\sqrt{\frac{n}{k^{2}+3k+2}}
\Diag\gz{\underbrace{1,\dots,1}_{k},-k-1,\underbrace{0,\dots,0}_{n-k-2},1}\\
&k=n-1:&\quad 
   \mathcal{C}_{n-1}&=a(E_{11}-E_{nn}).
\end{align*}
Simple computation shows that these matrices form an orthonormal basis in $\TMN{n}$ with respect 
  to the normalized Hilbert-Schmidt scalar product $\frac{1}{n}\scal{\cdot,\cdot}_{n}$.
According to Theorem \ref{th:clrproperties} the corresponding Gibbs states form an orthonormal basis
  with respect to $\scal{\cdot,\cdot}_{\circ}$ in $\MNN{n}$.
\end{proof}

\section{Conclusions}

In this paper the Aitchison geometry was generalized from classical statistical models to the
  quantum mechanical finite dimensional state space, that is a Hilbert-space structure on the state space was presented.
The intervals in this space was already used in quantum hypothesis testing, as in the classical case.
The introduced scalar product, inspired by the classical one, surprisingly preserve the tensor product 
  and can be expressed in terms of modular operators and Hamilton operators.
The Hilbert space structure of the state space is turned out to be unitary invariant.
The analogue of the log-ratio transformation was given in this quantum setting which helped give an
  orthonormal basis in the state space.
The above mentioned notions were studied in detail on the space of qubits.

As in the classical case, where the Aitchison geometry first was presented on the simplex and later
  extended to more and more complicated statistical models, in quantum setting, the presented geometry 
  is just the first step in this direction.
In the future the generalization to infinite dimensional state spaces case 
  could be the counterpart of the classical continuous case.
The generalization to the Radon-Nykodim derivative of states in a von Neumann algebra could be the considered
  as a noncommutative version of the existing extension to Radon-Nykodim derivative of measures.

\section*{Acknowledgments}

Both of the authors enjoyed the support of the Hungarian National Research,
  Development and Innovation Office (NKFIH) grant no. K124152.


\bibliography{ref} 
\bibliographystyle{plain}

\end{document}